\newcommand{\kems}{KEMS}
\newcommand{\troca}[3]
{\ifthenelse{\equal{#1}{1}}{#2}{#3}}
\newcommand{\EQUALDEF}{\stackrel{\mbox{\tiny def}}{=}}
\newcommand{\impli}{\to}
\newcommand{\NOT}{\neg}
\newcommand{\IMP}{\impli}
\newcommand{\AND}{\wedge}
\newcommand{\OR}{\vee}
\newcommand{\cons}{\circ}
\newcommand{\CONS}{\circ}
\newcommand{\com}[1]{}
\newcommand{\comm}[1]{{\bf }}
\newcommand{\T}{\begin{tt}T\end{tt}\, }
\newcommand{\F}{\begin{tt}F\end{tt}\, }
\newcommand{\x}{\ensuremath{\times}}
\newcommand{\mbc}{{\bf mbC}}
\newcommand{\cpl}{{\bf CPL}}
\newcommand{\mci}{{\bf mCi}}
\newcommand{\cum}{\ensuremath{C_1}}
\newcommand{\cene}{{\bf $C_n$}}
\newcommand{\ke}{{\bf KE}}
\newcommand{\bigor}{\bigvee}		
\newcommand{\bigand}{\bigwedge}
\newcommand{\FifF}{\mbox{\em $\Phi^5$}}		
\newcommand{\SF}{\mbox{\em $\Phi^6$}}		
\newcommand{\cumDef}{$\cons A \stackrel{\mbox{\tiny def}}{=} \neg (A \AND \neg A)$}
\newcommand{\satt}{$DS$}
\newcommand{\oPoCRule}[2]{
$
\begin{array}{c}
#1
\\ \hline
#2
\end{array}
$
}
\newcommand{\oPtCRule}[3]{
$
\begin{array}{c}
#1
\\ \hline
#2
\\ 
#3
\end{array}
$
}
\newcommand{\tPoCRule}[3]{
$
\begin{array}{c}
#1
\\ 
#2
\\ \hline  
#3
\end{array}
$
}
\newcommand{\thPoCRule}[4]{
$
\begin{array}{c}
#1
\\ 
#2
\\
#3
\\ \hline
#4
\end{array}
$
}
\newcommand{\connspcimp}{\!\!\!}
\newcommand{\timpOneNm}{($\T\connspcimp\IMP_1$)}
\newcommand{\timpOne}{
$
\tPoCRule{\T A \IMP B}{\T A}{\T B} \timpOneNm
$
}
\newcommand{\timpTwoNm}{($\T\connspcimp\IMP_2$)}
\newcommand{\timpTwo}{
$
\tPoCRule{\T A \IMP B}{\F B}{\F A} \timpTwoNm
$
}
\newcommand{\fimpNm}{($\F\connspcimp\IMP$)}
\newcommand{\fimp}{
$
\oPtCRule{\F A \IMP B}{\T A} {\F B} \fimpNm
$
}
\newcommand{\connspc}{\!}
\newcommand{\fandOneNm}{($\F\connspc\AND_1$)}
\newcommand{\fandOne}{
$
\tPoCRule{\F A \AND B}{\T A}{\F B} \fandOneNm
$
}
\newcommand{\fandTwoNm}{($\F\connspc\AND_2$)}
\newcommand{\fandTwo}{
$
\tPoCRule{\F A \AND B}{\T B}{\F A} \fandTwoNm
$
}
\newcommand{\tandNm}{($\T\connspc\AND$)}
\newcommand{\tand}{
$
\oPtCRule{\T A \AND B}{\T A} {\T B} \tandNm
$
}
\newcommand{\torOneNm}{($\T\connspc\OR_1$)}
\newcommand{\torOne}{
$
\tPoCRule{\T A \OR B}{\F A}{\T B} \torOneNm
$
}
\newcommand{\torTwoNm}{($\T\connspc\OR_2$)}
\newcommand{\torTwo}{
$
\tPoCRule{\T A \OR B}{\F B}{\T A} \torTwoNm
$
}
\newcommand{\forNm}{($\F\connspc\OR$)}
\newcommand{\for}{
$
\oPtCRule{\F A \OR B}{\F A} {\F B} \forNm
$
}
\newcommand{\fnegNm}{($\F\connspc\NOT$)}
\newcommand{\fneg}{
$
\oPoCRule{\F \NOT A}{\T A} \fnegNm
$
}
\newcommand{\pbNm}{(PB)}
\newcommand{\newPB}
{
\Tree [.{} $\T\,A$ $\F\,A$ ] (PB)
}
\newcommand{\tnegMbcNm}{($\T\connspc\NOT'$)}
\newcommand{\tnegCumNm}{($\T\connspc\cons\neg$)}
\newcommand{\tnegCum}{
$
\tPoCRule{\T \cons A}{\T \NOT A}{\F A} \tnegCumNm
$
}
\newcommand{\tnotnotNm}{($\T\connspc\NOT\NOT$)}
\newcommand{\tnotnot}{
$
\oPoCRule{\T \NOT \NOT A}{\T A} \tnotnotNm
$
}
\newcommand{\any}{ \oslash }
\newcommand{\fconsanyOneNm}{($\F\connspc\cons\any_1$)}
\newcommand{\fconsanyOne}{
$
\tPoCRule{\F \cons (A \any B)}{\T \cons A}{\F \cons B} \fconsanyOneNm
$
}
\newcommand{\fconsanyTwoNm}{($\F\connspc\cons\any_2$)}
\newcommand{\fconsanyTwo}{
$
\tPoCRule{\F \cons (A \any B)}{\T \cons B}{\F \cons A} \fconsanyTwoNm
$
}
\newcommand{\tNotAnyLeftNm}{($\T\NOT\any_1$)}
\newcommand{\tNotAnyLeft}{
$
\thPoCRule{\T \NOT (A \any B)}{\T A \any B}{\T \cons A}{\F \cons B} \tNotAnyLeftNm
$
}
\newcommand{\tNotAnyRightNm}{($\T\NOT\any_2$)}
\newcommand{\tNotAnyRight}{
$
\thPoCRule{\T \NOT (A \any B)}{\T A \any B}{\T \cons B}{\F \cons A} \tNotAnyRightNm
$
}
\newcommand{\tNotAnyLeftPBNm}{($\T\NOT\any_1'$)}
\newcommand{\tNotAnyLeftPB}{
$
\tPoCRule{\T \NOT (A \any B)}{\T (A \any B) \AND \cons A}{\F \cons B} \tNotAnyLeftPBNm
$
}
\newcommand{\tNotAnyRightPBNm}{($\T\NOT\any_2'$)}
\newcommand{\tNotAnyRightPB}{
$
\tPoCRule{\T \NOT (A \any B)}{\T (A \any B) \AND \cons B}{\F \cons A} \tNotAnyRightPBNm
$
}
\begin{document}

\begin{frontmatter}
\title{Towards an efficient prover for the \cum\/ paraconsistent logic}

\author{Adolfo Neto\thanksref{Consrel}\thanksref{myemail}}
\address{Informatics Department (DAINF)\\Federal University of Technology - Paran\'a (UTFPR)\\ Curitiba, Brazil}
\author{Celso A.~A.~Kaestner\thanksref{coemail}}
\address{Informatics Department (DAINF)\\Federal University of Technology - Paran\'a  (UTFPR)\\ Curitiba, Brazil}
\author{Marcelo Finger\thanksref{Consrel}\thanksref{cocoemail}}
\address{Computer Science Department (DCC)\\University of S\~{a}o Paulo (USP)\\S\~{a}o Paulo, Brazil}
\thanks[Consrel]{Adolfo Neto and Marcelo Finger acknowledge support from Funda\c{c}c\~{a}o de Amparo a Pesquisa do Estado de S\~{a}o Paulo (FAPESP), Brazil, through the Thematic Project ConsRel, 
grant number 2004/14107-2. Marcelo Finger also acknowledges an individual research grant from the Brazilian Research Council (CNPq).}
\thanks[myemail]{Email:
    \href{mailto:adolfo@utfpr.edu.br} {\texttt{\normalshape
        adolfo@utfpr.edu.br}}} 
\thanks[coemail]{Email:
    \href{mailto:celsokaestner@utfpr.edu.br} {\texttt{\normalshape
        celsokaestner@utfpr.edu.br}}}
\thanks[cocoemail]{Email:
    \href{mailto:mfinger@ime.usp.br} {\texttt{\normalshape
        mfinger@ime.usp.br}}}

\begin{abstract}
The \ke\/ inference system is a tableau method developed by Marco Mondadori
which was presented as an improvement, in the computational efficiency sense, over 
Analytic Tableaux.
In the literature, there is no description of a theorem prover based on the \ke\/ method 
for the \cum\/ paraconsistent logic.
Paraconsistent logics have several applications, such as in robot control and medicine.
These applications could benefit from the existence of such a prover.
We present a sound and complete \ke\/ system for \cum, an informal specification of a 
strategy for the \cum\/ prover
as well as problem families that can be used to evaluate provers for \cum.
The \cum~\ke\/ system and the strategy described in this paper 
will be used to implement a \ke~based prover for \cum, which
will be useful for those who study and apply paraconsistent logics.
\end{abstract}

\begin{keyword}
tableaux systems, \ke\/ system, \cum\/ logic, paraconsistent logics, problem families.
\end{keyword}


 \end{frontmatter}

\section{Introduction}

Inconsistency is a phenomena that appears naturally in our world.
Consider the following situation: two persons have different
(contradictory) opinions about a specific statement $A$: the first
one considers $A$ true, meanwhile the second one believes that $\neg
A$ is true. This contradiction, however, should not prevent that
common conclusions which do not involve $A$ -- directly or indirectly
-- can be deduced.

This situation is not adequately managed by classical logic, since
it is not equipped to deal with inconsistency. The reason is the
well known {\em ``Ex contradictione sequitur quod libet''} principle: if
a theory $\Gamma$ is inconsistent, that is, if formulas $A$ and
$\neg A$ are theorems, then every formula $B$ of the language is
also a theorem in $\Gamma$; or, shortly, $\Gamma$ becomes
{\em trivial}.

Paraconsistent Logics were initially proposed by Da Costa \cite{teseDaCosta} as
logical systems that deal with contradictions in a discriminating
way, avoiding the previous principle and managing inconsistent but
non-trivial theories.

Presently automatic proof methods are widely used in several
computer applications, such as in robot control \cite{4153157}, in medicine
\cite{Carvalho,10.1109/CBMS.2007.71}, and many others \cite{CKB}. Most of the employed methods work on
logical formalisms based on classical logic. In this paper we
present the specification of an strategy for automatic theorem prover based on a \ke\/ system, an
improvement of the well known tableaux deduction method, for a
particular paraconsistent logic called \cum.

The rest of this paper is organized as follows: Section 2 introduces
the axiomatization and valuation of the paraconsistent logic 
\cum; in Sections 3 and 4 we present the \ke\/ system for
\cum\/ and its inference rules, and the \kems\/ strategy,
respectively; Section 5 presents a set of problems constructed to
evaluate the prover; in Section 6 we present a motivating example,
showing that our proposal is adequate to deal with practical
problems; in Section 7 we compare our work with similar ones; finally
in Section 8 we draw some conclusions and propose future research.

We emphasize the main contributions of this paper: (a) a sound and
complete \ke\/ system for \cum\/ (Section~\ref{ke_cum}); (b) an informal
specification of a \kems\/ \cite{KEMSsite} strategy for the \cum\/ prover
(Section~\ref{strategy_cum}); and (c) problem families that can be used to evaluate
provers for \cum\/ (Section~\ref{problemFamilies}).

\subsection{Preliminaries}

Let ${\cal P}$ be a countable set of propositional letters. We concentrate on the propositional language
${\cal L}$ formed by the usual boolean connectives $\IMP$ (implication), $\AND$ (conjunction), $\OR$ (disjunction) and $\NOT$ (negation). We call $\Sigma$ this set of connectives: $\Sigma=\{ \NOT, \AND, \OR, \IMP\}$ ($\Sigma$ is called a signature in \cite{CCM05}). $\bigand^n_{i=1}$ and $\bigor^n_{i=1}$ are, respectively, iterated conjuntion and iterated disjunction.

Throughout the paper, we use uppercase Latin or lowercase Greek letters to denote arbitrary formulas, 
and uppercase Greek letters to denote sets of formulas. 

We also work here with signed formulas. A signed formula is an expression
${\cal S}~A$ where ${\cal S}$ is called the sign and $A$ is a propositional formula.
The symbols $\T$ and $\F$, respectively representing the `true' and `false' truth-values, can be used as signs.
The conjugate of a signed formula $\T A$ ($\F A$) is $\F A$ ($\T A$). The subformulas of a signed formula
${\cal S}~A$ are all the formulas of the form $\T B$ or $\F B$ where $B$ is a subformula of $A$.

The size of a signed formula ${\cal S}~A$ is defined as the size of $A$. 
The size $s(A)$ of a formula $A$ is defined as usual:
\begin{itemize}
 \item $s(A)=1$ if $A$ is a propositional atom;
 \item $s(\any A)=1+s(A)$, where $A$ is a formula and $\any$ is a unary connective;
 \item $s(A \any B)= 1 + s(A) + s(B)$, where $\any$ is a binary connective, and $A$ and $B$ are
      formulas.
\end{itemize}

A propositional valuation $v$ is a function $v: {\cal P} \rightarrow \{0,1\}$. We extend the definition of
valuations to signed formulas in the following way: $v(\T A)=v(A)$ and $v(\F A)=1-v(A)$.

\section{\cum, a paraconsistent logic}


\cum\/ is a paraconsistent logic \cite{teseDaCosta}, ``a logic of the early paraconsistent vintage'' \cite{CCM05}.
It is part of the hierarchy of logics \cene, $1 \leq n < \omega$  \cite{CKB}.
 \cum\/ is of historical importance because it was one of the first paraconsistent logics to be presented. 


Paraconsistent logics are logics in which theories can be inconsistent but nontrivial \cite{CKB}.
In classical logic, $A \AND \NOT A \vdash B$ for any formulas $A$ and $B$. This is not true in paraconsistent logics. 
\com{
Paraconsistent logics have several applications, such as in robot control \cite{4153157}, in medicine \cite{10.1109/CBMS.2007.71,Carvalho}, and many others \cite{10.1109/CBMS.2007.71,Carvalho}. 
}

\com{
{\em
In classical logic, if you have a theory $\Gamma$, and with this theory you can prove a formula $A$
($\Gamma \vdash A$), you have to make sure that you do not have $\Gamma \vdash \NOT A$ (that is, $ \Gamma \not\vdash \NOT A$). Because if $\Gamma \vdash \NOT A$, then $\Gamma \vdash B$, for any formula $B$.
This feature is called the ``Principle of Explosion'' \cite{CCM05}.

In a paraconsistent logic such as \cum, it is possible to have $\Gamma \vdash A$ and $\Gamma \vdash \NOT A$ without having  $\Gamma \vdash B$ for any $B$. Therefore, the Principle of Explosion does not hold in \cum.
}
}

In \cum, a consistency operator ($\cons$) is introduced. The intended meaning of $\cons A$ is ``$A$ is consistent'' \cite{CCM05}.
According to \cite{CCM05}, ``da Costa's intuition was that the `consistency' (which he dubbed `good behavior')
of a given formula would not only be a sufficient requisite to guarantee
its explosive character, but that it could also be represented as an {\em ordinary
formula} of the underlying language.''

In \cum, da Costa represented the consistency of a formula $A$ by the formula $\NOT (A \AND \NOT A)$.
That is, the consistency connective ``$\circ$'' is not a primitive connective,  but an abbreviation:
\[
\circ A \EQUALDEF \neg (A \AND \neg A).
\]



\com{
In \cum, $\Gamma \vdash A \AND \NOT A \AND \cons A$ is not possible 
unless $\Gamma \vdash B$ for any $B$. In such a case, $\Gamma$ is trivial.

In the next two subsections we will characterize \cum\/ by presenting its axiomatization and semantics.
}


\subsection{\cum's Axiomatization}

Some axiomatizations for \cum\/ were presented in the literature \cite{CCM05,daco:asem77,D'Ottaviano200627}.
The presentation below is based on \cite{CCM05} and \cite{daco:asem77}.

{\bf Axiom schemas:}
\begin{description}
 \item [(Ax1)] $\alpha \IMP (\beta \IMP \alpha)$
 \item [(Ax2)] $(\alpha \IMP \beta) \IMP ((\alpha \IMP (\beta \IMP \gamma)) \IMP (\alpha \IMP \gamma))$
 \item [(Ax3)] $\alpha \IMP (\beta \IMP (\alpha \AND \beta))$
 \item [(Ax4)] $(\alpha \AND \beta) \IMP \alpha$
 \item [(Ax5)] $(\alpha \AND \beta) \IMP \beta$
 \item [(Ax6)] $\alpha \IMP (\alpha \OR \beta)$
 \item [(Ax7)] $\beta \IMP (\alpha \OR \beta)$
 \item [(Ax8)] $(\alpha \IMP \gamma) \IMP ((\beta \IMP \gamma) \IMP ((\alpha \OR \beta) \IMP \gamma))$
 \item [(Ax10)] $\alpha \OR \NOT \alpha$
 \item [(Ax11)] $\NOT \NOT \alpha \IMP \alpha$
 \item [(bc1)] $\cons \alpha \IMP (\alpha \IMP (\NOT \alpha \IMP \beta))$
 \item [(ca1)] $(\cons \alpha \AND \cons \beta) \IMP \cons (\alpha \AND \beta)$
 \item [(ca2)] $(\cons \alpha \AND \cons \beta) \IMP \cons (\alpha \OR \beta)$
 \item [(ca3)] $(\cons \alpha \AND \cons \beta) \IMP \cons (\alpha \IMP \beta)$
\end{description}

{\bf Inference rule:}\\ \\
{\bf (MP)} $\dfrac{\alpha, \alpha \IMP \beta}{\beta}$

The difference from classical propositional logic (\cpl) axiomatization is that 
to obtain an axiomatization for \cpl\/ we must remove the schemas that deal with the consistency connective ({\bf (bc1), (ca1), (ca2)} and {\bf (ca3)}) and add the following axiom schema (called `explosion law'
in \cite{CCM05}):
\begin{description}
 \item [(exp)] $\alpha \IMP (\NOT \alpha \IMP \beta)$
\end{description}

%

\subsection{\cum's Valuation}

\cum\/ received a bivaluation semantics in \cite{daco:asem77} (see also \cite{CCM05}).
A set of clauses characterizing \cum-valuations (adapted from the one in \cite{CCM05}) 
\com{(see appendix~ \ref{discussion_cum_valuations})} 
is the following:
\begin{itemize}
 \item  $v(\alpha_1 \AND \alpha_2)=1 $ if and only if  $ v(\alpha_1)=1$ and $v(\alpha_2)=1$;
 \item  $v(\alpha_1 \OR \alpha_2)=1 $ if and only if $ v(\alpha_1)=1$ or $v(\alpha_2)=1$;
 \item  $v(\alpha_1 \IMP \alpha_2)=1 $ if and only if $ v(\alpha_1)=0$ or $v(\alpha_2)=1$;
 \item  $v(\NOT \alpha)=0 $ implies $ v(\alpha)=1$;
 \item  $v(\NOT \NOT \alpha)=1 $ implies $ v(\alpha)=1$;
 \item  $v(\cons \alpha)=1$  implies $v(\alpha)=0$ or $v(\NOT \alpha)=0$.
 \item  $v(\cons (\alpha \any \beta))=0 $ implies $ v(\cons \alpha)=0$ or $ v(\cons \beta)=0$, for $\any \in \{\AND, \OR, \IMP\}$;
\end{itemize}

\begin{definition}
Let $\Gamma$ be $\{ A_1, A_2, \ldots, A_n \}$ for $n \ge 0$. 
$\Gamma \vdash B$ is a valid sequent in \cum~if and only if, 
$v(B)=1$ whenever $v(A_i)=1$ for all $i$ ($1 \leq i \leq n$).
``$\Gamma \vdash B$ is a valid sequent in \cum'' can be abbreviated to $\Gamma\vdash_{C_1}B$.
\end{definition}

\section{The \ke\/ System for \cum}
\label{ke_cum}

The \ke\/ inference system is a tableau method \cite{dagostino94taming} developed by Marco Mondadori and discussed in detail in several works authored or co-authored by Marcello D'Agostino \cite{broda95solution,dagostino-are,DAgostino99}.
The \ke\/ system  was presented as an improvement,
in the computational efficiency sense, over Analytic Tableaux \cite{Smullyan68}.
A \ke\/ System is a tableau system in which there is only one branching rule.
As branching can lead to repetition of efforts (i.e.~the same work being done in two or more branches),  
branching rules lead to less efficient proof systems (and implementations) \cite{DAgostino99}.
\com{
It is a refutation system that, though close to the analytic tableau method, is not affected by the anomalies of cut-free systems \cite{DAgostino99}.
}

\begin{figure}[t!]
\begin{center}
$
\begin{array}{ccc}
\fimp & \timpOne & \timpTwo  \\
& & \\
\tand & \fandOne & \fandTwo  \\
& & \\
\for  & \torOne & \torTwo \\
& & \\
\fneg & \tnotnot & \\
& & \\
\tnegCum & \fconsanyOne & \fconsanyTwo   \\
& & \\
& \newPB &  \\
\end{array}
$
\end{center}
\caption{\cum\/ \ke\/ rules.}
\label{cum_KE_rules}
\end{figure}

\com{

\begin{figure}[t!]
\begin{center}
$
\begin{array}{ccc}
\timpOne & \timpTwo & \fimp \\
& & \\
\fandOne & \fandTwo & \tand \\
& & \\
\torOne & \torTwo & \for \\
& & \\
\fneg & \newPB & \tnegCum \\
& & \\
\tnotnot & \tNotAnyLeft & \tNotAnyRight   \\
& & \\
     & \tNotAnyLeftPB & \tNotAnyRightPB   \\
\end{array}
$
\end{center}
\caption{New \cum\/ \ke\/ rules.}
\label{new_cum_KE_rules}
\end{figure}

{\bf 
In Figure~\ref{new_cum_KE_rules}, for the four $\T\NOT\any$ rules, $B$ is not $\NOT A$.

Therefore, every s-formula can be used as main at most once.
}

}

 
We present here a sound and complete
\ke\/ System we have devised for \cum.
The rules in our system are presented in Figure~\ref{cum_KE_rules}.
Note that in \cum, the set of connectives is $\Sigma=\{ \NOT, \AND, \OR, \IMP\}$  but,
to make the rules simpler, we have used the connectives in $\Sigma^\CONS=\Sigma \cup \{ \CONS\}$, i.e.~including the consistency connective, which is actually an abbreviation.

In \cite{IFIPAI2006,NF07} (and also in \cite{teseAdolfo}) the first and third authors 
of this paper have presented \ke\/ Systems
for two other paraconsistent logics: \mbc\/ and \mci\/ (more about these two logics can be found in \cite{CCM05}). The \ke\/ System for \cum\/ has several rules in common with the \ke\/ Systems for
these two logics.  However, in these logics, consistency ($\cons$) is not a defined connective.

As in classical \ke\/  rules \cite{DAgostino99}, rules with ``1'' (for instance, ``$\F\AND_1$'') or  ``2'' as subscript are interchangeable. Only one of each pair is actually essential. By using the {\bf PB}
rule (Figure~\ref{cum_KE_rules}), the other can be derived.


Note also that 
$\F\!\!\CONS \any_1$
and 
$\F\!\!\CONS \any_2$
are actually three rules each, because $\any$ can be any connective in $\{ \AND, \OR, \IMP \}$.
In a 2-premiss rule, the {\em main premiss} is the first premiss. The second premiss is
called {\em minor premiss}. The main premiss in 
$\F\!\!\CONS \any_1$
 (or in 
$\F\!\!\CONS \any_2$)
can be ``$\F \cons (A \AND B)$'', ``$\F \cons (A \OR B)$'' or ``$\F \cons (A \IMP B)$''. 
And, as `$\cons$' is a defined connective, ``$\F \cons (A \any B)$'' is actually ``$\F \NOT ((A \any B)  \AND \NOT (A \any B)$''. For instance, $\F\!\!\CONS \AND_1$ is:
\begin{center}
\tPoCRule{\F \NOT ((A \AND B)  \AND \NOT (A \AND B))}{\T \NOT (A \AND \NOT A)}{\F \NOT (B \AND \NOT B)} $(\F\!\!\CONS \AND_1)$
\end{center}

It is easy to see that these rules ($\F\!\!\CONS \any_1$
and $\F\!\!\CONS \any_2$)
are not analytic. In $\F\!\!\CONS \AND_1$, $\F\!\!\NOT (B \AND \NOT B)$ is not a subformula of any premiss.

Therefore, in our system we have:
\begin{itemize}
 \item 12 essential linear  rules (5 of these rules are 1-premiss rules and 7 rules are 2-premiss rules);
 \item 6 derived linear 2-premiss rules;
 \item 1 (0-premiss) branching rule.
\end{itemize}
Of these rules, 6 of them ($\F\!\!\CONS \any_1$ and $\F\!\!\CONS \any_2$) are rather complex, 
far more complex than any \cpl\/ \ke\/ rule.

\begin{example}
The formula $\NOT(P \AND(\NOT P \AND \CONS P))$ 
can be proved in \cum~\ke~system as depicted in Figure~\ref{fig:prova_gamma_cum_ke}.
The same formula was proved in \cite{CCM05} using the \cum~tableau system presented there (Figure~\ref{fig:prova_lfi}). It is easy to see that the \cum~\ke~proof has less formula nodes and less branches
than the \cum~tableau \cite{CCM05} proof.

\begin{figure}
\begin{small}
\begin{center}
\[
\begin{array}{c}
\F\, \NOT (P \AND (\NOT P \AND \CONS P)) \\
\T\, P \AND(\NOT P \AND \CONS P)\\
\T\, P\\
\T\, \NOT P \AND \CONS P\\
\T\, \NOT P\\
\T\, \CONS P\\
\F\, P \\
\x
\end{array}
\]
\caption{A proof of $\NOT(P \AND(\NOT P \AND \CONS P))$ using the \cum~\ke~system.}
\label{fig:prova_gamma_cum_ke}
\end{center}
\end{small}
\end{figure}

\begin{figure}
\begin{small}
\begin{center}
\Tree
[.{$\F\, \NOT (P \AND (\NOT P \AND \CONS P))$ \\
$\T\, P \AND (\NOT P \AND \CONS P)$ \\
$\T\, P $ \\
$\T\, \NOT P \AND \CONS P$ \\
$\T\, \NOT P$ \\
$\T\, \CONS P$}
[.{$\T\, P \IMP P$}
[.{$\T\, P \IMP \NOT P$\\
   $\F\,P$\\
   $\x$
   }
]
[.{$\F\, P \IMP \NOT P$\\
   $\T\,P$\\
   $\F\,\NOT P$\\
   $\times$
   }
]
]
[.{$\F\, P \IMP P$\\
   $\T\,P$\\
   $\F\,P$\\
   $\times$
   }
]
]
\caption{A proof of $\NOT(P \AND(\NOT P \AND \CONS P))$ \cite{CCM05}.}
\label{fig:prova_lfi}
\end{center}
\end{small}
\end{figure}

\end{example}

\com{

We have proved 
\framebox{(Section~\ref{ke_cum_proof_sc})} 
that the \cum\/ \ke\/ system is correct and complete.
}

\com{

DISCUSS Signed formula (\T\/ and \F).

Rules with ``1'' (for instance, ``\fandOneNm'') and  ``2'' as subscripts are interchangeable. Only one of each pair is actually essential. Using PB, the other can be derived.

$A$ and $B$ can be any well formed formula.

$\any$ can be any connective in $\{ \AND, \OR, \IMP \}$.

$\cons$ is a derived connective in \cum. 

$\cons A \EQUALDEF \NOT (A \AND \NOT A)$.

How did we arrive at these rules?

First, we knew the \ke\/ System for CPL.

Then we got to know a tableau system for \mbc\/ and a tableau system for \cum.

After that we devised a \ke\/ System for \mbc.

Then we extended this system to get a \ke\/ System for \cum.

Note that ``\fconsanyOneNm'' and ``\fconsanyTwoNm'' are actually three rules each.

The main premiss in ``\fconsanyOneNm'' (or in ``\fconsanyTwoNm'') can be ``$\F \cons (A \AND B)$'', ``$\F \cons (A \OR B)$'' or ``$\F \cons (A \IMP B)$''.

And, as `$\cons$' is a defined connective, ``$\F \cons (A \any B)$'' is actually ``$\F \NOT ((A \any B)  \AND \NOT (A \any B)$''.

Therefore these rules are not analytic. See, for instance:

\tPoCRule{\F \NOT ((A \any B)  \AND \NOT (A \any B))}{\T \NOT (A \AND \NOT A)}{\F \NOT (B \AND \NOT B)} \fconsanyOneNm

And they are also very complex.

The size of the formulas in the ``\fconsanyOneNm'' rule is 19 (9+5+5). If ``$\cons$'' were not a defined connectivem it would be 8 (4+2+2).

For all the non-``$\NOT$'' and non-``$\cons$'' 1-premiss and 2-premiss rules, the complexity is 5 (3+1+1).

For ``\tnegMbcNm``,  the complexity is 8 (it would be 5 if ``$\cons$'' was not a defined connective).

``\fnegNm'' and ``\tnotnotNm'' have complexities of 3 and 4, respectively.

But, what does this mean for our implementation?

Does this make the premiss rule pattern matching more difficult?

Pattern matching is more difficult when you have to verify if a ``$\cons$'' rule can be applied.
Even more difficult when you think that:
\begin{itemize}
 \item if you translate all formulas with ``$\cons$'' to their defined meanings, the size of the problem may grow a lot (exponentially?)
 \item if you DO NOT translate, then whenever a $\NOT (A \AND \NOT A)$ appears, for any formula $A$ (no matter how complex it is), you should treat it as $\cons A$ in rule applications! 
\end{itemize}

}

\subsection{Soundness and Completeness}

Our intention here is to prove that the \cum\/ \ke\/ system is sound and complete. 
\com{
It is clear to us that the \cum\/ \ke\/ system is not analytic,
because its 
$\F\!\!\CONS \any_1$
and 
$\F\!\!\CONS \any_2$
rules are not analytic (i.e., the conclusion is not a subformula of
any premiss).
}
The proof is very similar to the \mci\/ \ke\/ system's soundness and completeness proof  presented in Section B.2.4 of \cite{teseAdolfo}. We begin with some definitions.

\begin{definition}\cite{DAgostino99}
A branch of a \ke~tableau is closed when $\T A$ and $\F A$ appear in the branch.
\end{definition}

\begin{definition}\cite{DAgostino99}
A \ke~tableau is closed if all its branches are closed.
\end{definition}

\begin{definition}
$\Gamma \vdash_{\mbox{\tiny \cum\ke}} B$ if there is a closed \ke~tableau for $\Gamma \vdash B$.
\end{definition}

\begin{definition}
The \cum\/ \ke\/ system  is sound if, for any $\Gamma$ and $B$, 
$\Gamma \vdash_{\mbox{\tiny \cum\ke}} B$  implies  $\Gamma\vdash_{\cum}B$.
\end{definition}

\begin{definition}
The \cum\/ \ke\/ system  is complete if, for any $\Gamma$ and $B$, 
$\Gamma\vdash_{\cum}B$ implies $\Gamma \vdash_{\mbox{\tiny \cum\ke}} B$.
\end{definition}

\com{

\begin{theorem}
The \cum\/ \ke\/ system is sound and complete.
{\em
\begin{pf}
The proof (obtained by following the style of the soundness and completeness proof 
for the analytic tableau system \cite{Smullyan68})
is presented in Appendix~\ref{ke_cum_proof_sc}.
\com{
Due to lack of space we omit the soundness and completeness proof for the \cum\/ \ke\/ System.
The proof can be obtained by following the style of the soundness and completeness proof 
for the analytic tableau system \cite{Smullyan68}.
}
\end{pf}
}
\end{theorem}

}

\begin{definition}
\rm
A set of \cum\/ signed formulas \satt\/ is {\em downward saturated}:
\begin{enumerate}
\item whenever a signed formula is in \satt, its conjugate  is {\em not} in \satt;
\item when all premises of any \cum\/ \ke\/ rule  (except PB)
are in \satt, its conclusions are 
also in \satt;
\item when the major premiss of a 2-premiss \cum\/ \ke\/ rule is in \satt, either its auxiliary premiss or its 
conjugate is in \satt. 
%
%
\end{enumerate}
\label{downward_saturated}
\end{definition}

A Hintikka's Lemma holds for \cum\/ downward saturated sets:
\begin{lemma} 
(Hintikka's Lemma for \cum) Every \cum\/ downward saturated set is satisfiable.
{\em 
\begin{proof}
For any downward saturated set \satt, we can easily construct a \cum\/ 
valuation $v$ 
such that for every signed formula 
${\cal S} X$ in the set, $v({\cal S} X)=1$. How can we guarantee 
this is in fact a valuation? First, we know that there is no pair
$\T X$ and $\F X$ in \satt.
Second, all premised \cum\/ \ke\/ rules preserve valuations.
That is, if $v({\cal S} X_i)=1$ for every premiss ${\cal S} X_i$, 
then $v({\cal S} C_j)=1$ for all conclusions $C_j$.
And if $v({\cal S} X_1)=1$ and $v({\cal S} X_2)=0$,
where $X_1$ and $X_2$ are, respectively, major and minor premises
of a \cum\/ \ke\/ rule, then $v({\cal S'} X_2)=1$, where ${\cal S'} X_2$
is the conjugate of ${\cal S} X_2$.
\com{
For instance, suppose $\T A \AND B \in $ \satt, then $v(\T A \AND B)=1$.
In accordance with the definition of downward saturated sets, 
$\{ \T A, \T B \} \subseteq$ \satt. 
And by the definition of \cum\/ valuation, $v(\T A \AND B)=1$ implies $v(\T A)=v(\T B)=1$.
}
Therefore, \satt\/ is satisfiable.
%
\end{proof}
}
\end{lemma}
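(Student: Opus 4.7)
The plan is to construct a $\cum$-valuation $v$ directly from the downward saturated set \satt\/ and then verify, formula by formula, that every signed formula in \satt\/ receives the value $1$ under $v$. First I would define $v$ on propositional atoms by setting $v(p)=1$ whenever $\T p \in $ \satt, $v(p)=0$ whenever $\F p \in $ \satt, and making an arbitrary choice (say $0$) on atoms that do not appear signed in the set. Clause~1 of Definition~\ref{downward_saturated} guarantees that this assignment is unambiguous, since \satt\/ cannot contain both $\T p$ and $\F p$.

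Next I would extend $v$ to compound formulas using the standard \cum\/ valuation clauses recalled in Section~2.2, and then prove by induction on the size $s(A)$ of the underlying formula $A$ that if $\mathcal{S} A \in $ \satt\/ then $v(\mathcal{S} A) = 1$. The base case is immediate. For the inductive step I would go rule by rule: for each 1-premiss linear rule, the conclusions are strict subformulas of the premiss, so the induction hypothesis applies together with the corresponding valuation clause (e.g.~for $\T A \AND B$, closure under $\tandNm$ forces $\T A, \T B \in $ \satt, hence $v(A)=v(B)=1$, hence $v(A\AND B)=1$). For the 2-premiss rules I would exploit clause~3 of Definition~\ref{downward_saturated}: given the major premiss, either the auxiliary premiss or its conjugate is in \satt; in either case, combining the induction hypothesis with the relevant \cum\/ valuation clause yields the desired value for the major premiss. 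The branching rule PB is not a closure condition, so nothing needs to be checked for it.

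The main obstacle will be the two non-analytic rule schemas $\F\!\!\CONS \any_1$ and $\F\!\!\CONS \any_2$, whose conclusions mention strictly more complex formulas than the premiss (since $\cons$ abbreviates $\NOT(\cdot \AND \NOT \cdot)$). A naive induction on $s(A)$ therefore does not directly handle these cases. My plan is to circumvent this by measuring complexity on the \emph{abbreviated} formula, treating $\cons$ as a primitive symbol whose degree is $1+s(A)$, so that $\cons(A \any B)$ is strictly more complex than $\cons A$ and $\cons B$. Equivalently, I would perform the induction on the signed formula seen as a term in $\Sigma^{\CONS}$, which restores subformula-style behaviour for the $\F\!\!\CONS\any_i$ rules. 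Under this measure, the semantic clause $v(\cons(\alpha \any \beta))=0$ implies $v(\cons \alpha)=0$ or $v(\cons \beta)=0$ is exactly what is needed to transport the induction hypothesis across the non-analytic rules.

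Finally I would verify that the function $v$ thus extended truly satisfies all the \cum\/ valuation clauses (and not merely those triggered by formulas in \satt). This is automatic for the boolean clauses because they are defined equationally from $v$ on subformulas; for the negation and consistency clauses it reduces to checking that the implications listed in Section~2.2 hold for $v$, which follows from how $v$ was defined on compound formulas. Combining these steps, every signed formula in \satt\/ is satisfied by $v$, so \satt\/ is satisfiable, as required.
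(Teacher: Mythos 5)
Your overall strategy is the one the paper itself uses: read a valuation off the downward saturated set and use clauses 2 and 3 of Definition~\ref{downward_saturated} to argue that the closure conditions force it to be a genuine \cum-valuation satisfying every member of \satt. You supply much more structure than the paper does (atomic base case, induction on complexity, rule-by-rule analysis), and your observations that clause 1 makes the atomic assignment unambiguous and that PB imposes no closure condition are correct. One remark before the main point: the detour through a modified degree that treats $\cons$ as primitive is unnecessary. With the paper's own size function, $s(\cons B)=s(\NOT(B\AND\NOT B))=3+2s(B)$, which is strictly smaller than $s(\cons(A\any B))=5+2s(A)+2s(B)$; so although the $\F\!\!\CONS\any_1$ and $\F\!\!\CONS\any_2$ rules are not analytic, they are still size-decreasing and the ordinary measure already works.

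The substantive gap is the step ``extend $v$ to compound formulas using the standard \cum\ valuation clauses.'' This is not a definition, because the bivaluation semantics of \cum\ is not truth-functional: the clauses for $\AND$, $\OR$ and $\IMP$ are biconditionals and do determine $v$ compositionally, but the clauses for $\NOT$ (and hence for the abbreviation $\cons$) are one-way implications, so when $v(A)=1$ nothing determines $v(\NOT A)$. You must define $v$ on negations by consulting the set itself (for instance, $v(\NOT A)=1$ iff either $\T\NOT A$ belongs to \satt\ or $v(A)=0$), and then your concluding ``verification that $v$ truly satisfies all the clauses'' is not automatic --- it is the heart of the proof. In particular, the clause $v(\cons(\alpha\any\beta))=0$ implies $v(\cons\alpha)=0$ or $v(\cons\beta)=0$ constrains the values of $v$ on formulas of the form $\NOT\gamma$ that need not occur in \satt\ at all, so it cannot simply ``follow from how $v$ was defined on compound formulas''; it requires an explicit argument about how those undetermined values are chosen, using the $\T\!\connspc\cons\neg$ closure condition and the freedom to set $v(\NOT\gamma)=1$ when $v(\gamma)=1$. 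The paper's proof is equally silent on this point, but since your write-up explicitly promises the verification, this is the step you would actually have to carry out, and as written the proposal asserts rather than proves precisely the clauses that make $C_1$ delicate.
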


\begin{theorem}
Let \satt' be a set of signed formulas. 
\satt' is satisfiable {\em if and only if} there
exists a downward saturated set \satt'' such that \satt' $\subseteq$ \satt''.
\end{theorem}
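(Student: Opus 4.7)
The plan is to prove the two directions separately, leveraging the preceding Hintikka's Lemma for the easy direction and an explicit construction for the harder one.

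For the ($\Leftarrow$) direction, suppose there is a downward saturated set $\Phi''$ with $\Phi' \subseteq \Phi''$. By Hintikka's Lemma for \cum, $\Phi''$ is satisfiable, so there exists a \cum-valuation $v$ with $v(\mathcal{S}X)=1$ for every $\mathcal{S}X \in \Phi''$. Since $\Phi' \subseteq \Phi''$, the same $v$ satisfies $\Phi'$. This direction is immediate and requires no further work.

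For the ($\Rightarrow$) direction, suppose $\Phi'$ is satisfiable, witnessed by a \cum-valuation $v$. I would define
\[
\Phi'' \EQUALDEF \{\, \mathcal{S}A \,:\, \mathcal{S} \in \{\T,\F\} \text{ and } v(\mathcal{S}A) = 1 \,\}.
\]
Clearly $\Phi' \subseteq \Phi''$, since $v$ satisfies every signed formula of $\Phi'$. It remains to verify the three clauses of Definition~\ref{downward_saturated}. Clause~(1) follows because $v(\T A) + v(\F A) = 1$ for every $A$, so $\T A$ and $\F A$ cannot both lie in $\Phi''$. For clauses~(2) and~(3), I would inspect each \ke\ rule of Figure~\ref{cum_KE_rules} and argue that, under the valuation clauses defining a \cum-valuation, the rules preserve the value $1$: if all premises of a non-PB rule receive value $1$ under $v$, so do the conclusions; and if the main premiss of a 2-premiss rule has value $1$ but the minor premiss does not, then its conjugate does, so one of the two lies in $\Phi''$.

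The main obstacle will be the non-analytic rules $\F\!\!\CONS\any_1$ and $\F\!\!\CONS\any_2$, since their conclusions involve subformulas not present in the premiss. Here I would use the sixth and seventh valuation clauses: if $v(\cons(\alpha\any\beta))=0$, then $v(\cons\alpha)=0$ or $v(\cons\beta)=0$, so at least one of $\F\cons\alpha$ and $\F\cons\beta$ is in $\Phi''$; this is exactly what is needed to verify clause~(3) for these rules. The remaining rules are routine consequences of the propositional and negation clauses in the definition of a \cum-valuation (in particular the clauses for $\NOT\NOT$, for $\NOT$, and for $\cons$). Once every rule has been checked, $\Phi''$ is downward saturated and contains $\Phi'$, completing the proof.
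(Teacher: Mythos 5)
Your proof is correct. The ($\Leftarrow$) direction coincides with the paper's: both reduce it to Hintikka's Lemma plus the inclusion \satt' $\subseteq$ \satt''. For the ($\Rightarrow$) direction you take a genuinely different route. The paper argues by contradiction: it saturates \satt' syntactically under clauses (ii) and (iii) of the definition of downward saturation, obtaining a family of candidate sets, and claims that if every member of that family violates clause (i), then, since all rules are valuation-preserving, \satt' must have been unsatisfiable. You instead build the witness semantically, taking \satt'' to be the set of \emph{all} signed formulas receiving value $1$ under the satisfying valuation $v$, and then verify the three saturation clauses directly against the \cum-valuation clauses. Your construction is the more standard one (essentially Smullyan's) and is arguably tighter: clauses (i) and (iii) become automatic from the totality of $v$ (for every $A$ exactly one of $\T A$ and $\F A$ gets value $1$), and clause (ii) reduces to checking, rule by rule, that each non-PB rule is valuation-sound --- the same fact the paper leans on, but invoked without the detour through a family of syntactic saturations and without having to justify why failure of saturation must manifest as an explicit pair $\{ \T X, \F X \}$. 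One small mislabelling: your observation that $v(\cons(\alpha \any \beta))=0$ forces $v(\cons \alpha)=0$ or $v(\cons \beta)=0$ is what verifies clause (ii) for the rules $\F\!\!\CONS \any_1$ and $\F\!\!\CONS \any_2$ (both premisses true implies the conclusion true), not clause (iii), which for your maximal \satt'' holds trivially for every 2-premiss rule; this does not affect the correctness of the argument.
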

\begin{proof}
$(\Leftarrow)$
First, let us prove that if there exists a downward saturated set \satt'' such that \satt' $\subseteq$ \satt'', then \satt' is satisfiable.
This is obvious because from \satt'' we can obtain a valuation that satisfies all
formulas in \satt'', and \satt' $\subseteq$ \satt''.

$(\Rightarrow)$
Now, let us prove that if \satt' is satisfiable, there exists a downward saturated 
set \satt'' such that \satt' $\subseteq$ \satt''.

So, suppose that \satt' is satisfiable and that there is no
downward saturated set \satt'' such that \satt' $\subseteq$ \satt''.
Using items (ii) and (iii) of \eqref{downward_saturated}, we can obtain 
a family of sets of signed formulas \satt'$_i$ ($i \ge 1$) that include \satt'. 
If none of them is downward saturated, it is because for all $i$, $\{ \T X, \F X\} \in
$\satt'$_i$ 
for some $X$.
But all rules are valuation-preserving, so this can only happen if 
\satt\/ is unsatisfiable, which is a contradiction. 
\end{proof}

\begin{corollary}
\satt' is an unsatisfiable set of formulas if and only if there is no downward
saturated set \satt'' such that \satt' $\subseteq$ \satt''.
\end{corollary}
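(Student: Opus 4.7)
The plan is to observe that this corollary is nothing more than the contrapositive of the theorem stated immediately above it, so the proof should amount to little more than negating both sides of the biconditional. First I would recall the theorem: $\Phi'$ is satisfiable if and only if there exists a downward saturated set $\Phi''$ with $\Phi' \subseteq \Phi''$. By definition, $\Phi'$ is unsatisfiable precisely when $\Phi'$ is not satisfiable, so I can apply negation to both sides of the equivalence.

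More concretely, I would argue as follows. Assume $\Phi'$ is unsatisfiable. If there were a downward saturated $\Phi''$ with $\Phi' \subseteq \Phi''$, then by the ($\Leftarrow$) direction of the theorem $\Phi'$ would be satisfiable, contradicting our assumption. Conversely, assume that no downward saturated $\Phi''$ contains $\Phi'$. If $\Phi'$ were satisfiable, then by the ($\Rightarrow$) direction of the theorem such a $\Phi''$ would exist, again a contradiction. Hence the two conditions are equivalent.

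There is no real obstacle here, since the content of the corollary is entirely carried by the theorem; the only thing to be careful about is to invoke the correct direction of the biconditional in each case and to note explicitly that unsatisfiability is defined as the negation of satisfiability for \cum\ valuations. No additional appeal to Hintikka's Lemma or to properties of the \ke\ rules is required beyond what the theorem has already established.
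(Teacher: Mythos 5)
Your proof is correct and matches the paper's intent exactly: the corollary is simply the contrapositive of the preceding theorem, and the paper accordingly states it without any further proof. Your explicit negation of both sides of the biconditional, invoking each direction of the theorem, is all that is required.
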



\begin{theorem}
The \cum\/ \ke\/ system is sound and complete.
{\em
\begin{proof}
The \cum\/ \ke\/ proof search procedure for a set of signed formulas $S$ 
either provides one or more downward saturated sets that give a 
valuation satisfying $S$ or finishes with no downward saturated set.
The \cum\/ \ke\/ system is a refutation system.
The \cum\/ \ke\/ system is sound because 
if a \cum\/ \ke\/ tableau for a set of formulas $S$ closes, then there is no 
downward saturated set that includes it, so $S$ is unsatisfiable.
If the tableau is open and completed, then any of its open branches can be represented 
as a downward saturated set and be used to provide a valuation that satisfies $S$
(in other words, $S$ is satisfiable).

The \cum\/ \ke\/ system  is complete because if $S$ is satisfiable, no \cum\/ \ke\/ 
tableau for a set of formulas $S$ closes. And if $S$ is unsatisfiable, 
all completed \cum\/ \ke\/ tableaux for $S$ close.
\end{proof}
}
\end{theorem}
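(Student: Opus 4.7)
The plan is to treat the KE system as a refutation system in the standard way and leverage the already-established bridge between satisfiability and downward saturated sets (the Hintikka Lemma, the preceding theorem, and its corollary). Concretely, soundness will amount to: a closed tableau for $S = \Gamma \cup \{\F B\}$ witnesses that no branch of any completed tableau for $S$ corresponds to a downward saturated set extending $S$, so by the corollary $S$ is unsatisfiable, which is exactly $\Gamma \vdash_{\cum} B$. Completeness will be the contrapositive: if no closed tableau exists, fairly saturating the tableau produces an open branch which one reads off as a downward saturated set extending $S$, hence by Hintikka's Lemma a \cum-valuation satisfying $S$, contradicting $\Gamma \vdash_{\cum} B$.

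First I would set up a systematic (fair) proof-search procedure for the \cum\/ \ke\/ system that guarantees every signed formula on a branch is eventually analyzed: for each 1-premiss rule whose premiss is present, its conclusion is added; for each 2-premiss rule whose major premiss is present, one checks whether the minor premiss or its conjugate occurs on the branch and, if neither does, one applies \pbNm\/ on the auxiliary signed formula prescribed by the rule. The six rules $\F\!\CONS\any_1$ and $\F\!\CONS\any_2$ are handled in the same way, which is the critical point: because they are not analytic, I must make sure the strategy both schedules them and controls them, applying them only when their major premiss actually appears and, once applied, never re-firing them redundantly, so that the construction of an open branch terminates in a downward saturated set (or clearly extends to one in the limit).

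Next I would verify the two clauses of Definition~\ref{downward_saturated} for any open, completed branch~$\satt$: clause (i) holds because the branch is open; clause (ii) holds because every non-\pbNm\/ rule whose premises all lie on $\satt$ has had its conclusions added; clause (iii) holds because fair application of \pbNm\/ on the 2-premiss rules guarantees that for every major premiss in $\satt$ either the minor premiss or its conjugate has been placed on the branch that remained open. Combined with the preceding Hintikka-style lemma, this yields a \cum-valuation satisfying every signed formula on $\satt$, hence satisfying $S$.

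Soundness then follows directly: a closed tableau for $S$ means no branch yields such a downward saturated extension, so by the corollary $S$ is unsatisfiable, i.e.\ $\Gamma \vdash_{\cum} B$. Completeness follows by contraposition: if $S$ is satisfiable, the systematic procedure yields an open completed branch, hence a downward saturated set extending $S$, contradicting the assumption that every \cum\/ \ke\/ tableau for $S$ closes. The main obstacle, and the place to spend care, is the non-analytic $\F\!\CONS\any_i$ rules: I must argue that the fair strategy still terminates on open branches (or, at least, that the limit set is downward saturated) despite these rules introducing formulas outside the subformula closure of $S$, so that step (ii) of the saturated-set definition is genuinely achieved. Everything else follows the template of the soundness and completeness argument for the \mci\/ \ke\/ system cited from \cite{teseAdolfo}.
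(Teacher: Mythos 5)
Your proposal is correct and follows essentially the same route as the paper's own proof: both reduce soundness and completeness to the correspondence between satisfiability and downward saturated sets established by Hintikka's Lemma and the preceding theorem and corollary, reading closed tableaux as the absence of a downward saturated extension and open completed branches as downward saturated sets yielding satisfying valuations. If anything, you are more explicit than the paper, which leaves the fairness of the search procedure and the termination issue raised by the non-analytic $\F\!\!\CONS\any_1$ and $\F\!\!\CONS\any_2$ rules implicit (the paper defers the latter to its decidability sketch), so the extra care you flag there is a genuine improvement rather than a divergence.
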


%
%
%
%

\subsection{Decidability}

We do not prove here that the \cum\/ \ke\/ system is decidable, i.e., that 
there is an algorithm for finding proofs in the \cum\/ \ke\/ system.
We  only  present the sketch of such a proof that will be detailed 
in a future paper about the implementation of a \cum\/ prover.

The idea is to define a restriction of the \cum\/ \ke\/ system which 
imposes some conditions on the application the PB rule (Figure~\ref{cum_KE_rules}).
In this {\em restricted} \cum\/ \ke\/ system, the PB rule can only be applied in a branch:
\begin{itemize}
 \item when there is a non-atomic signed formula 
that can be the main premiss of a 2-premiss rule and that was not yet analysed (i.e. used 
as main premiss) in the branch; and
 \item when either $\T A$ or $\F A$ can be the minor premiss of a 2-premiss rule,
where $A$ is the PB formula (i.e.~the $A$ formula that appears as $\T A$ in the new left branch and $\F A$ in the new right branch after PB application).
\end{itemize}

For all the 2-premiss rules in Figure~\ref{cum_KE_rules}, the minor premiss's size is smaller 
than major premiss's size. This, alongside with the conditions above, guarantees the the proof search procedure eventually terminates.

\section{A KEMS Strategy for \cum}
\label{strategy_cum}

KEMS \cite{teseAdolfo} is a theorem prover that can be used to implement strategies for many different logical systems. For instance, the current version \cite{KEMSsite} has 6 strategies for \cpl, 2 strategies for \mbc\/
and 2 strategies for \mci.

We have to follow some steps to implement a strategy for a logical system in \kems. 
First, one has to know how KEMS implementation is structured (by reading \cite{teseAdolfo} and the source code available in \cite{KEMSsite}).
Second, one has to implement the classes that will represent the logical system (such as \cpl\/ or \cum).
Third, one has to implement the classes necessary to represent the rules of the \ke\/ system (such as \cum\/ \ke\/ system).
Only after these three steps, one can implement one or more strategies for a given \ke\/ system.

\ke\/ systems (as well as many logical proof methods) are usually presented by showing their rules.
The rules tells us only what we can do -- they do not specify in which order to use the rules.
A strategy is a deterministic algorithm for a given \ke\/ system, as well
as a set of data structures used by the algorithm.

\com{
For \cum, a first action can be necessary for all strategies. If the input is a problem written in $\Sigma_\CONS$, one can rewrite it, by using \cumDef, so that all its formulas are in  $\Sigma$.
}

\subsection{\cum\/ \ke\/ Simple Strategy}
\label{cumSS}


The \cum\/ \ke\/ Simple Strategy resembles  \mbc\/  and \mci\/ Simple Strategies 
(see Sections C.4.4 and C.4.5 of \cite{teseAdolfo}). 
Let us informally describe the algorithm performed by this strategy:
\begin{enumerate}
 \item the strategy applies all possible linear rules in the current branch
(in the beginning, the current branch is the branch containing the formulas obtained from the problem);
 \item if the current branch closes (i.e.~if a contradiction $\{\T A, \F A\}$ is found), 
then the strategy tries to remove a branch from its {\em stack of open branches}. If it succeeds, 
this branch becomes the current branch and the control goes back to the first step. 
If there is no remaining open branch,  the procedure ends and the result is that the 
tableau is declared {\bf closed};
 \item if the current branch is linearly saturated (i.e.~no more linear rules can be applied),  
but not closed, the  strategy tries to apply the PB rule. 
The PB rule can be applied when there is at least one 
non-atomic signed formula in the branch 
that can be the main premiss of a 2-premiss rule and this signed formula was not yet 
used as the main premiss in an application of a 2-premiss rule.
If the strategy can apply the PB rule, then the (new) right
branch is put in the stack of open branches and the  left branch becomes the current branch.
If the strategy cannot apply the PB rule, then the procedure finishes by declaring the tableau {\bf open}.
\end{enumerate}

The order of rule applications 
is:
\begin{enumerate}
  \item \cum~\ke~1-premiss rules;
  \item \cum~\ke~2-premiss rules;
  \item the  PB rule.
\end{enumerate}
See Sections C.2 and C.4 of \cite{teseAdolfo} for more details on how rules are applied in KEMS.

\subsubsection{Implementation Remarks}

This strategy is a very straightforward strategy for a \cum~\ke~system.
The idea is to use the PB rule only as a last resource (as shown in the canonical procedure for
\ke~\cite{DAgostino99}).
The difference is that in the \cum~\ke~system we cannot restrict the strategy to perform only analytic
applications of PB. An analytic application of PB is an application of PB where the PB formula
(i.e.~the $A$ formula that appears as $\T A$ in the new left branch and $\F A$ in the new right branch
after PB application)
is a subformula of some formula in the branch.


Another difficulty in the implementation of this strategy (actually in the implementation of almost any proof
system for \cum) is how to deal with the consistency connective.

We have two options:
\begin{enumerate}
 \item only accept problems using the connectives in $\Sigma$. Therefore, all rules presented in
Figure~\ref{cum_KE_rules} will have to be implemented using $\Sigma$ connectives (which makes the
rules and the associated pattern matching more complex). Note that the size of 
problems written in $\Sigma^\CONS$ may grow exponentially (in the worst case) when translated to
$\Sigma$;
 \item accept problems written in $\Sigma^\CONS$ and, whenever a $\NOT (A \AND \NOT A)$ formula appears
(for any $A$), treat it as if it was (also) $\cons A$ in the applications of rules that have formulas with
$\CONS$ as premisses. Although this option allows the prover to deal with smaller problems, it makes rule applications more difficult. 
\end{enumerate}

\cum\/ Simple Strategy will use option (i) above. Option (ii) will be used on a second strategy for \cum~\ke.

\com{

It is easier to implement analytic tableau systems than \ke~systems
because in analytic tableau systems all rules have exactly one premiss.
There are no 0-premiss or 2-premiss rules.

In 2-premiss rules, there is a major and a minor premiss. When the strategy finds a formula that can be the major
premiss in the application of a 2-premiss rule, the strategy must try to find a formula in the current branch 
that can be the minor premiss for that rule.
If the strategy does not find a minor premiss, than the strategy 
than the strategy must look for another formula that can be a major premiss in a 2-premiss rule application.

Only when all possible 2-premiss rules in a branch are applied, the strategy is allowed to try to apply the 
0-premiss PB rule.

To apply the PB rule one usually needs a motivation. That is, all formulas that can be a major premiss
in a 2-premiss rule are called PB candidates.
To apply PB the strategy must choose a PB candidate (which is associated to a 2-premiss rule).
After that, the strategy branches on the current branch, producing two new branches.
The motivation for applying PB is to be able to apply the 2-premiss rule on the (new) left branch.
}

\com{

It is important to notice here that in \cum 's \tnegMbcNm\/ rule
the two premises have the same size.
We only need to branch on a `$\CONS A$' formula if it already appears 
as subformula of some formula in this branch. 
So we have to add this check before applying the \pbNm\/ rule.

Every PB application must have a motivation, that is, there must be 
a formula in the current branch which will be used as main premiss in some rule application R after PB.
This formula must have not yet been analyzed.
The formula in the left branch created by the PB application will be the minor premiss in the application 
of R.


???
But here we do not have to restrict the application of the \pbNm\/ rule
because of the \tnegMbcNm\/ rule.
All other features of this strategy are equal to \mbc\/ Simple Strategy features.

\subsection{Properties of \cum\/  \ke\/ Proofs}

\newcommand{\ThisTR}[1]{\Tr[ref=t]{#1}}

And what about the difference between \ke\/ \cpl\/ proofs and \ke\/ \cum\/ proofs?
As \cum\/ \ke\/ is an extension to \mbc\/ \ke, all rules that can be used with \mbc\/
can also be used with \cum.
But in \cum\/ \ke\/ we can also use  ($\T \neg \neg$), ($\F \cons 1$), and ($\F \cons 2$) rules.

Let us suppose again we are expanding ${\cal T}_1$ in Figure~\ref{proof_sketch}, now a \ke\/ \cum\/ proof tree, and we apply the PB rule on a `$\cons A$' formula. 
In addition to the possibilities described in the previous Section, 
in the left subtree ${\cal T}_2$, $\T \cons A$ can be used as a minor premiss
in applications of the  ($\F \cons 1$) and ($\F \cons 2$) rules,

And in the right subtree ${\cal T}_3$, 
because of \cumDef\/ we have the following:
\[
\begin{array}{l}
\F  \neg (A \AND \neg A) \\
\T (A \AND \neg A) \\
\T A \\
\T \neg A \\
\vdots
\end{array}
\]

In addition to that, if $A = (A_1 \oslash A_2)$, $\F \cons A$ can be used as a major premiss
in ($\F \cons 1$) and ($\F \cons 2$) applications.


}

\section{Problem Families to Evaluate \cum\/ Provers}
\label{problemFamilies}

A problem family is a set of problems that we know, by construction, whether they are valid, satisfiable
or unsatisfiable \cite{teseAdolfo}. A problem is a sequent that can be given as input for a theorem prover. 
The $i$-th instance (for $i \ge 1$) of a problem family is a (valid, satisfiable or unsatisfiable) sequent.

In Section D.1.2 of \cite{teseAdolfo}, seven families of difficult problems that
can be used to evaluate theorem provers for paraconsistent logics were presented.
All these families were families of valid sequents.
To the best of our knowledge, there are no other families of difficult problems
designed with this purpose in mind.
The families presented there can be used to evaluate provers for two 
logics: \mbc\/ and \mci, which are part of 
the class of logics of formal inconsistency (LFIs) \cite{CCM05}.

In \cite{CCM05} it is shown that \cum\/ can also be classified as an LFI and that it extends \mbc.
Therefore, the first four families created to evaluate \mbc\/ provers \cite{teseAdolfo} can also be used to evaluate provers for \cum. 

However, these families do not test all \cum\/ \ke\/ rules. That is, to prove the problems in those families 
using the \cum\/ \ke\/ system, one does not need to use all its rules.
Therefore, to extend what we could call ``rule coverage'', i.e.~to test more rules, we present two more families of sequents.
These sequents are valid in \cum\/ (but not in \mbc, therefore they can also be used to test \mbc\/ provers) and, for proving them, we have to use rules that are not used in the first four families' proofs.

These families were not developed with any intuitive meaning in sight. As the objective was to test theorem
provers, they were designed to be difficult to prove, by using as many rules as possible.

The motivation for developing and presenting these problem families {\em before} the 
actual \cum\/ prover was implemented was, inspired by the Test-Driven Development technique for 
software development \cite{TDD-BECK-2002}, to use the tests as a guide for the design and 
implementation of the sofwtare.

Note: to make it easier to read the problems, we have used the connectives in $\Sigma^\CONS$ and we sometimes use ``['' and ``]'' in place of  ``('' and ``)''.

\subsection{Fifth family}

The sequents in this family (\FifF) demand \cum\/'s $\,\T\neg\neg\,$ rule to be proven valid.
$\Phi^5_n$ (the nth instance of \FifF) is:
\[
\CONS A_1,
\bigand^n_{i=1} (A_i), \bigand^n_{i=1} [ A_{n+1} \IMP  ( (A_i \OR B_i) \IMP (\cons A_{i+1}) ) ],
(\bigand^n_{i=1} \cons A_i) \IMP \neg A_{n+1} \\
\vdash \neg \neg \neg A_{n+1}
\]

\troca{2}
{}
{
For instance, $\Phi^5_3$ in signed tableau notation  is:
%
\[
\begin{array}{l}
\T \,\, \cons A_1 \\
\T \,\, A_1 \AND A_2 \AND A_3 \\
\T \,\, [A_4 \IMP ((A_1 \OR B_1) \IMP (\cons A_2))]  \\
  	\,\,\,\, \AND \, [A_4 \IMP ((A_2 \OR B_2) \IMP (\cons A_3))]   \\
	\,\,\,\, \AND \, [A_4 \IMP ((A_3 \OR B_3) \IMP (\cons A_4))] \\
\T \,\, ( (\cons A_1)  \AND (\cons A_2) \AND (\cons A_3)) \IMP \neg A_4\\
\F \,\, \neg \neg \neg A_4 \\
\end{array}
\]
}

\subsection{Sixth family}

In order to prove, using the \cum\/ \ke\/ system, that the sequents in this family (\SF) are valid, 
it is necessary to use the two \cum\/ \ke\/ rules where ``$\CONS$'' is the main connective in the main premiss: $\F \cons \any_1$ and $\F \cons \any_2$.
$\Phi^6_n$ (the n-th instance of \SF) is:
\begin{small}
\[
\bigand^n_{i=1} (B_i), 
\bigand^n_{i=1} (\CONS C_i),
\bigand^n_{i=1} ( (A_i \OR B_i) \IMP (\cons A_{i+1}) ),
(\bigand^n_{i=1} C_i) \IMP (D \AND \neg C_1)
\vdash 
[\bigor^n_{i=1}(\cons(A_{i+1} \IMP C_i) )] \OR D
\]
\end{small}

\troca{2}
{}
{
For instance, $\Phi^6_3$ is:
\[
\begin{array}{l}
\T \, B_1 \AND B_2 \AND B_3 \\
\T \, \CONS C_1 \AND \CONS C_2 \AND \CONS C_3 \\
\T ((A_1 \OR B_1) \IMP (\cons A_2))  \AND ((A_2 \OR B_2) \IMP (\cons A_3)) \AND
  ((A_3 \OR B_3) \IMP (\cons A_4)) \\
\T (C_1 \AND C_2 \AND C_3) \IMP (D \AND \neg C_1) \\
\F [\cons(A_2 \IMP C_1)] \OR [\cons(A_3 \IMP C_2)] \OR [\cons(A_4 \IMP C_3)] \OR D
\end{array}
\]
}

\section{A Motivating Example}


We present here an example almost completely based on the example shown in \cite{KrauseBibel}:
\begin{quotation}
Consider the construction of a simple medical system aimed at diagnosing three diseases $K$, $L$ and $M$.
There are two different symptoms, denoted by $N$ and $O$. The intended usage of this system is as follows:
\begin{itemize}
\item The core part of the system is the knowledge provided by a doctor ($DOC_1$).
\item When we intend to apply this knowledge to a specific patient, other professionals conduct medical tests on this patient add the results of these tests to the knowledge base.
\item In order to use the system, we submit a goal to the program in a similar way as it is done in Prolog.
\end{itemize}

We assume that the system is written in the form of a finite set of formulas over \cum\/. Suppose that $DOC_1$ provided us the following five rules (formulas):
\begin{description}
 \item [$(F_1)$]  $K \IMP \NOT L$
 \item [$(F_2)$]  $L \IMP \NOT K$
 \item [$(F_3)$]  $K \IMP M$
 \item [$(F_4)$]  $N \IMP K$
 \item [$(F_5)$]  $O \IMP L$
\end{description}

Intuitively, the doctor is telling that:
\begin{itemize}
 \item An individual cannot have both diseases $K$ and $L$ ($F_1$ and $F_2$).
 \item If an individual has the disease $K$, them he has the disease $M$ ($F_3$) 
 \item If an individual has the symptom $N$, them he has the disease $K$ ($F_4$) 
 \item If an individual has the symptom $O$, them he has the disease $L$ ($F_5$) 
\end{itemize}
\end{quotation}

To exemplify the use of this knowledge base, we describe four situations. The first one is similar to a query to a Prolog program, while the other three explore the capacity of handling inconsistencies:

{\bf Case 1:} Suppose that the patient has symptom $N$ and we want to know if he has the disease  $K$ but not $L$.

To answer this query we must verify if 
\[
  F_1, F_2, F_3, F_4, F_5, N \vdash_{\cum} K \AND \NOT L
\]
is valid.
\troca{1}
{
As the \ke\/ proof for this sequent is a closed tableau, this sequent is valid.
It is also valid in classical logic.
}
{

The \ke\/ proof
for this sequent (a closed tableau, which shows that the sequent is valid) is the following:
\begin{small}
\[
\begin{array}{c}
\T K \IMP \NOT L \\ 
\T L \IMP \NOT K \\
\T K \IMP M \\
\T  N \IMP K \\
\T O \IMP L \\
\T N \\
\F K \AND \NOT L \\ \hline
\T K \\
\F \NOT L \\
\T \NOT L \\
\x 
\end{array}
\]
\end{small}
}

{\bf Case 2:} Now suppose that the patient tested positive for symptoms $N$ and $O$, and we want to know if he has both diseases $K$ and $L$.

To answer this query we must verify if 
 \begin{equation}
  F_1, F_2, F_3, F_4, F_5, N, O \vdash K \AND L
 \label{Case2}
 \end{equation}
is valid.
\troca{1}
{
\eqref{Case2} is valid in \cum. In classical propositional logic, \eqref{Case2} is also valid. 
Actually, in classical propositional logic:
 \begin{equation}
F_1, F_2, F_3, F_4, F_5, N, O \vdash B
 \label{Case2b}
 \end{equation}
is valid for any formula $B$.
However, \eqref{Case2b} is not valid in \cum~ for any formula $B$ .
}
{
The \ke\/ proof for this sequent (a closed tableau) is the following:
\begin{small}
\[
\begin{array}{c}
\T K \IMP \NOT L \\ 
\T L \IMP \NOT K \\
\T K \IMP M \\
\T  N \IMP K \\
\T O \IMP L \\
\T N \\
\T O \\
\F K \AND L \\ \hline
\T K \\
\T L \\
\F L \\
\x 
\end{array}
\]
\end{small}

Notice that in classical propositional logic, $F_1, F_2, F_3, F_4, F_5, N, O \vdash_{\mbox{CPL}} B$ for any formula $B$.
}

{\bf Case 3:} Now suppose that the patient tested positive for symptoms $N$ and $O$, and we want to know if he has {\em not} the disease $M$.

To answer this query we must verify if 
\begin{equation}
  F_1, F_2, F_3, F_4, F_5, N, O \vdash_{\cum} \NOT M
\label{Case3}
\end{equation}
is valid.
\troca{2}
{
The \ke\/ proof for this sequent is an open tableau, which shows that \eqref{Case3} is NOT valid.
}
{
The \ke\/ proof for this sequent (an open tableau, which shows that the sequent is NOT valid) is the following:
\begin{small}
\[
\begin{array}{c}
\T K \IMP \NOT L \\ 
\T L \IMP \NOT K \\ %
\T K \IMP M \\ %
\T N \IMP K \\ %
\T O \IMP L \\ %
\T N \\
\T O \\
\F \NOT M \\ \hline
\T K \\
\T L \\
\T M  \\
\T \NOT K \\
\T \NOT L \\
\end{array}
\]
\end{small}

However, this sequent is valid in classical logic, because a classical contradiction is found ($\T K$ and $\T \NOT K$).
}

{\bf Case 4:} Now suppose again that the patient tested positive for symptoms $N$ and $O$, but now we want to know if he has the disease $K$ and if this conclusion is  not consistent ($\NOT \cons K$).

To answer this query we must verify if 
\begin{equation}
   F_1, F_2, F_3, F_4, F_5, N, O \vdash_{\cum} K \AND \NOT \cons K
  \label{Case4}
\end{equation}
is valid.
\troca{2}
{
There is a closed \ke~ tableau for \eqref{Case4}, showing that it is a valid sequent. 

This query shows that, besides ``dealing with inconsistencies in the knowledge base without every formula becoming derivable'' \cite{KrauseBibel}, a common feature of paraconsistent logics, \cum\/ allows us to express propositions about the (in)consistency of formulas.
}
{
The \ke\/ proof for this sequent (a closed tableau) is the following:
\begin{small}
\[
\begin{array}{c}
\T K \IMP \NOT L \\ 
\T L \IMP \NOT K \\ %
\T K \IMP M \\ %
\T N \IMP K \\ %
\T O \IMP L \\ %
\T N \\
\T O \\
\F K \AND \NOT \cons K \\ \hline
\T K \\
\T L \\
\T \NOT L  \\
\T \NOT K \\
\T M \\
\F \NOT \cons K \\
\T \cons K \\
\F K \\
\x
\end{array}
\]
\end{small}

This query shows that, besides ``dealing with inconsistencies in the knowledge base without every formula becoming derivable'' \cite{KrauseBibel}, a common feature of paraconsistent logics, \cum\/ allows us to express propositions about the (in)consistency of formulas.


The sequent \eqref{Case4} is valid in classical logic. Note that, for any formula $B$, ``$\NOT \cons B$'' is a theorem in classical logic. Therefore, in classical logic, $\Gamma \vdash K \AND \NOT \cons K$ if and only 
$\Gamma \vdash K$.

}




%
%

\section{Related Work}


A tableau system for \cum\/ was presented in \cite{DBLP:journals/jancl/CarnielliM92}. 
As this system is based on analytic tableaux (AT) \cite{Smullyan68}, it has four branching rules:
the three ones from AT plus a $\T \, \NOT$ branching rule.
Due to this $\T \, \NOT$ rule, infinite loops may occur during the proof search, 
postponing indefinitely the analysis of formulas that involve the negation and consistency operators. Notwithstanding, this system is decidable. This system has been implemented but the source code is not available. 

In \cite{Buchsbaum1993} two tableau systems for \cum\/ were presented, the second one being a version
of the first one considered by the authors more adequate to be implemented.
The first system has 12 rules (8 of them are branching rules) while the second has 20 rules (12 of them are branching rules). The rules are rather complex, involving much more formulas and connectives than \cum\/ \ke\/ rules. The second system was elegantly implemented in LISP (the source code is available in \cite{BuchsbaumMSc}). However it was written in a LISP dialect (muLISP) which cannot be compiled in
modern LISP compilers.

We have experimented using Buchsbaum's system with the formulae described in Section~\ref{problemFamilies}. 
For example, it was not able to prove instance $\Phi^5_{27}$ due to lack of memory. 
This confirms that the family $\Phi^5$ is a family of difficult problems.
We are translating this prover to a different LISP dialect to make it more robust.

Another \cum\/ tableau system appears in \cite{CCM05}. 
It was 
obtained by using a general method for constructing tableau systems \cite{Caleiro2005}. 
Although this system has a PB branching rule, a feature of \ke\/ systems, 
is not a \ke\/ system.
To be a \ke\/ system it should have only one branching rule, but it  has 8 branching rules. 
Just like the system in \cite{DBLP:journals/jancl/CarnielliM92}, it is based on AT.
However, it does not have rules that lead to infinite loops.
We do not know of any implementation of this method.


In \cite{D'Ottaviano200627}, tableau systems for several logics of the $C_n$ hierarchy were presented.
The \cum\/ tableau system presented there is also based on AT.
While in the previous systems $\circ A \stackrel{\mbox{\tiny def}}{=} \neg (A \AND \neg A)$
was applied whenever necessary to generate the branches of the tableau, this system has specific rules
to directly deal with all operators, including ``$\CONS$''.
However, as it is based on the analytic tableau method, it also has too many (six) branching rules.
We also do not know of any implementation of this method.

\com{
It is important to notice that tableau systems presented in \cite{}
\cite{DBLP:journals/jancl/CarnielliM92}

developed by logicians

whilst

were developed by computer scientists

}

Therefore, the distinctive feature of our \cum\/ \ke\/ system is that
it has 13 essential rules and only of them is a branching rule. 
This feature will allow us to implement efficient strategies for this system 
in KEMS \cite{KEMSsite}.


\com{
Compared to \mbc\/ and \mci, \cum\/ has some features that make it more difficult to develop  a \ke\/ system
(and implement it). First, in \cum, differently from what happens in \mbc\/ and \mci, $\cons$  is a defined connective.

There are many tableaux systems for \cum.

In \cite{DBLP:journals/jancl/CarnielliM92}, we have a tableau system with a looping rule.

In \cite{CCM05}, another tableau system (obtained via a uniform procedure) was presented.
It is based on analytic tableaux and has too many branching rules.

In \cite{D'Ottaviano200627}, tableau systems for several logics of the $C_n$ hieararchy are presented.
Some rules.
}

\section{Conclusion}








In this paper, we have presented a sound and complete \ke~system for Da Costa's \cum~calculus 
for paraconsistent logic.
We have shown that our system has less branching rules than other tableau systems for 
\cum~described in the literature
\cite{BuchsbaumMSc,Buchsbaum1993,CCM05,DBLP:journals/jancl/CarnielliM92,D'Ottaviano200627}.
Therefore, it is probably more efficient than those systems (see \cite{DAgostino99} for
a discussion on why branching leads to inefficiency).

We have also described a strategy for this \ke~system that can be implemented in \kems.
Future work includes implementing this strategy, as well as designing and implementing other
strategies for the \cum~\ke~system.

In order to evaluate \cum~\ke~strategies, we have developed two problem families. These families
and the first four problem families described in section D.1.2 of \cite{teseAdolfo}
can also be used to evaluate other theorem provers for \cum, such as Arthur Buchsbaum's prover
for \cum~\cite{BuchsbaumMSc}.

As further work, we intend to compare the results obtained by our strategies (in the 
style of section D.2 of \cite{teseAdolfo}) among themselves as well as with Arthur Buchsbaum's prover.

\end{document}